\newcommand{\be}{\begin{equation}}
\newcommand{\ee}{\end{equation}}
\newcommand{\ba}{\begin{eqnarray}}
\newcommand{\ea}{\end{eqnarray}}
\newcommand{\ketbra}[2]{|#1\rangle \langle #2|}
\newcommand{\tr}{\operatorname{Tr}}
\newtheorem{Lemma}{Lemma}
\newtheorem{Result}{Result}
\begin{document}

\title{One-sided Device-independent Self-testing of any  Pure Two-qubit Entangled State}

\author{Suchetana Goswami}
\email{suchetana.goswami@gmail.com}
\affiliation{S.   N.  Bose  National Centre  for Basic  Sciences, Salt
  Lake,    Kolkata    700    106,   India}  
\author{Bihalan Bhattacharya}
\email{bihalan@gmail.com}
\affiliation{S.   N.  Bose  National Centre  for Basic  Sciences, Salt
  Lake,    Kolkata    700    106,   India}  
  \author{Debarshi      Das}
  \email{dasdebarshi90@gmail.com}
\affiliation{Centre for Astroparticle
  Physics and Space Science (CAPSS),  Bose Institute, Block EN, Sector
 V,  Salt Lake,  Kolkata 700  091, India} 
 \author{Souradeep Sasmal}  
\affiliation{Centre for Astroparticle
  Physics and Space Science (CAPSS),  Bose Institute, Block EN, Sector
  V,  Salt Lake,  Kolkata 700  091, India}
   \author{C.  Jebaratnam} 
   \email{jebarathinam@bose.res.in}
   \affiliation{S. N. Bose National Centre for
  Basic Sciences, Salt Lake, Kolkata 700 106, India}
    \author{A.    S.     Majumdar}
\email{archan@bose.res.in} \affiliation{S. N. Bose National Centre for
  Basic Sciences, Salt Lake, Kolkata 700 106, India} 
  
\date{\today}

\begin{abstract}

We consider the problem of $1$-sided device-independent self-testing of any pure entangled two-qubit state
based on steering inequalities which certify the presence of quantum steering. In particular, we note that 
in the $2-2-2$ steering scenario (involving $2$ parties, $2$ measurement settings per party, $2$ outcomes per measurement setting), 
the maximal violation of a fine-grained steering inequality can be used 
to witness certain extremal steerable correlations,
which certify all pure two-qubit entangled states.  We demonstrate that the violation of the analogous CHSH inequality of steering or nonvanishing value
of a quantity constructed using a correlation function called mutual predictability, 
together with the maximal violation of the fine-grained steering inequality  can be used to self-test any pure  entangled two-qubit state in a $1$-sided device-independent way.
\end{abstract}

\pacs{03.67.-a, 03.67.Mn}

\maketitle

\section{Introduction}

Quantum information processing utilizes three types of quantum inseparabilities \cite{EPR35, Schrodinger,Rei89,WJD07,JWD07,Bel64} for multipartite systems. These are entanglement \cite{HHH+09}, steering \cite{CS17} and Bell-nonlocality \cite{BCP+14}. 
Entangled states were first introduced in the context of the famous Einstein-Podolsky-Rosen (EPR) paradox \cite{EPR35}. Later on, in the same year Schr\"{o}dinger introduced the concept of  steering~\cite{Schrodinger} in response to the EPR paradox. EPR (or quantum) steering, as pointed out by Schr\"{o}dinger, occurs in the scenario where a bipartite state shared between, say, Alice and Bob is entangled and Alice prepares different ensembles of quantum states for Bob by performing measurements on her subsystem. 

More precisely, the concept of steering is ingrained in the fact that if the bipartite state (shared between Alice and Bob) is steerable from Alice to Bob then Alice can convince Bob by some local operation and classical communication that the state they are sharing is entangled (while Bob does not trust Alice). Basically, a bipartite quantum state exhibits steering if the conditional states prepared on one side by performing measurements on the other side cannot be modeled by a description known as local hidden state (LHS) model (i. e., Bob holds a grand ensemble whose elements are classically correlated with the outcomes of Alice's measurement) \cite{WJD07, JWD07}.  In \cite{WJD07,JWD07}, the authors have demonstrated that Bell-nonlocal states form a strict
subset of steerable states which also form a strict subset of entangled states.

Based on two assumptions, viz. no signalling and the validity of quantum theory, the device-independent (DI) certification of quantum devices is a relevant research direction in quantum information as well as in quantum foundations \cite{Sca12}. The DI approach has several applications, for example, in random number certification \cite{PAM+10}, cryptography \cite{MAG06}, testing the dimension of Hilbert Spaces \cite{BPA+08}. In Ref. \cite{MY04}, a DI scheme called self-testing was proposed to certify a Bell state (maximally entangled two-qubit state) up to local isometries.
Moreover, it has been shown that nonlocal correlations which are extremal \footnote{An extremal quantum correlation in a given Bell scenario cannot be decomposed as a convex mixture of the other quantum correlations in that given Bell scenario. 
Note that, there  exists extremal quantum correlations which do not violate any Bell's inequality maximally in that Bell scenario \cite{GKW+18}.} can be used for self-testing as these extremal correlations can only be achieved by performing particular measurements on a unique pure quantum state (up to some local isometry) \cite{CGS17}. 
Thus, by observing these extremal quantum correlations, it is possible to identify the entangled state without any assumption on the physical systems, measurements or even on the dimension of the relevant Hilbert Space. 

In Ref. \cite{MYS12}, the first criterion for robust self-testing of a singlet state (maximally entangled bipartite qubit state) in the DI scenario was proposed  using the maximal violation of Bell-CHSH (Bell-Clauser-Horne-Shimony-Holt) inequality \cite{CHS+69}. 
In Ref. \cite{AMP12}, a family of Bell inequalities called tilted Bell-CHSH inequality was studied. A family of extremal nonlocal correlations  which can be simulated by a pure two-qubit entangled state can be identified by the maximal violation of the tilted Bell-CHSH inequality.
In Ref. \cite{YN13, BP_15}, it has been shown that any pure two-qubit entangled state can be self-tested in a fully DI way by using these extremal correlations. In Refs. \cite{MY04,RZS12}, criteria for DI certification of quantum system were proposed without using Bell inequalities. In Ref. \cite{CGS17}, it has been shown  that any pure two-qudit entangled state can be self-tested in the Bell scenario where Alice and Bob perform three and four $d$-outcomes measurements on their respective sides. 

Steering inequalities \cite{CJW+09,ZHC16} which are analogous to Bell inequalities are used to certify the presence 
of steering.
The violation of a steering inequality certifies the presence of entanglement in a one-sided device-independent ($1$SDI) way. 
This has implication for quantum information processing in which quantum steering has been used as a resource for $1$SDI quantum key distribution
and randomness generation \cite{BCW+12,LTB+14}. 
It should be noted that it is easier and more cost effective to implement these 1SDI tasks than to implement the completely DI tasks in laboratory. It is thus very relevant and important to study the self-testing problem in the 1SDI framework. Recently, 1SDI self-testing of maximally entangled two-qubit state  has been proposed \cite{SH16,GWK17}. In this context, the maximal violation of a linear steering inequality \cite{CJW+09} was shown to self-test the maximally entangled two-qubit state in a 1SDI way. Moreover, self-testing via quantum steering was shown to provide certain advantages over DI self-testing.   

In this work, we are interested in the problem of self-testing of any pure two-qubit entangled state in the 1SDI scenario. For this purpose we consider two steering inequalities, {\it viz.} the fine grained inequality (FGI) \cite{PKM14}, whose maximum violation certifies that the shared state is pure two-qubit entangled state, and the analogous CHSH inequality for steering \cite{CFF+15}. We demonstrate that the violation of the analogous CHSH inequality of steering 
together with the maximal violation of the fine-grained steering inequality  can be used to self-test any pure  entangled two-qubit state in a $1$SDI way. We further propose another scheme for $1$SDI self-testing of any  pure two-qubit entangled state in which the non-vanishing value
of a quantity constructed using a correlation function called ``mutual predictability" together with the maximal violation of the fine-grained steering inequality is used.  

\section{Backdrop}
\subsection{Quantum steering}
Let us consider the following steering scenario. Two spatially separated parties, say Alice and Bob, share an unknown quantum system $\rho_{AB}\in \mathcal{B}(\mathcal{H}_A \otimes \mathcal{H}_B)$,  where $\mathcal{B}(\mathcal{H}_A \otimes \mathcal{H}_B)$ stands for the set of all density operators acting on the Hilbert space $\mathcal{H}_A \otimes \mathcal{H}_B$, with the Hilbert space dimension of Alice's subsystem is arbitrary and the Hilbert space dimension of Bob's subsystem is fixed. Alice performs a set of uncharacterized measurements (i. e., Alice's measurement operators   $\{M_{a|x}\}_{a,x}$ are elements of unknown positive operator valued measurements (POVM); $M_{a|x} \geq 0$ $\forall a, x$; and $\sum_{a} M_{a|x} = \mathbb{I}$ $\forall x$) on her part of the shared bipartite system $\rho_{AB}$  to prepare the set of conditional states on Bob's side. Here $x=0, 1, 2, ...$ denotes Alice's choice of measurement setting and $a = 0, 1, 2, ...$ denotes outcome of Alice's 
 measurement. Bob can do state tomography to determine the conditional states prepared on his side by Alice. Such a steering scenario is called $1$-sided device-independent (1SDI) since Alice's measurements are treated as black-box measurements. The steering scenario is characterized by an assemblage $\{\sigma_{a|x}\}_{a,x}$ \cite{Pus13} which is the set of unnormalized conditional states on Bob's side.
Each element in the assemblage  is given by $\sigma_{a|x}=p(a|x)\rho_{a|x}$,  where $p(a|x)$ is the conditional probability of getting the outcome $a$ when Alice performs the measurement $A_x$; $\rho_{a|x}$ is the normalized conditional state on Bob's side. Quantum theory predicts that all valid assemblages should satisfy the following criteria:
\begin{equation}
\sigma_{a|x}= \tr_A [( M_{a|x} \otimes \openone) \rho_{AB}] \hspace{0.5cm} \forall \sigma_{a|x} \in \{\sigma_{a|x}\}_{a,x}.
\end{equation}

In the above scenario, Alice demonstrates steerability to Bob
if the assemblage does not have a local hidden state (LHS) model, i.e., if for all $a$, $x$, there
is no decomposition of $\sigma_{a|x}$ in the form,
\begin{equation}
\sigma_{a|x}=\sum_\lambda p(\lambda) p(a|x,\lambda) \rho_\lambda,
\label{newlhs}
\end{equation}
where $\lambda$ denotes classical random variable which occurs with probability 
$p(\lambda)$; $\rho_{\lambda}$
are called local hidden states which satisfy $\rho_\lambda\ge0$ and
$\tr\rho_\lambda=1$.

We now consider a steering scenario in which the trusted party, Bob,  performs a set of POVMs with elements $\{M_{b|y}\}_{b,y}$ ($M_{b|y} \geq 0$ $\forall b, y$; and $\sum_{b} M_{b|y} = \mathbb{I}$ $\forall y$) 
on the conditional states prepared by Alice's unknown POVMs turning the assemblage  
$\{\sigma_{a|x}\}_{a,x}$ into  measurement correlations $p(ab|xy)$, where $p(ab|xy)$ = $\tr ( M_{b|y} \sigma_{a|x} )$. Here $y=0, 1, 2, ...$ denotes Bob's choice of measurement setting and $b = 0, 1, 2, ...$ denotes outcome of Bob's measurement. The correlation $p(ab|xy)$ detects steerability from Alice to Bob, iff it does not have a decomposition as follows \cite{WJD07, JWD07}: 
\begin{equation}
p(ab|xy)= \sum_{\lambda} p(\lambda) p(a|x,\lambda) p(b|y, \rho_\lambda) \hspace{0.3cm} \forall a,x,b,y; 
\label{LHV-LHS}
\end{equation}
where, $\sum_{\lambda} p(\lambda) = 1$, and $p(a|x, \lambda)$ denotes an arbitrary probability distribution arising from local hidden variable (LHV) $\lambda$ ($\lambda$ occurs with probability $p(\lambda)$) and $p(b|y, \rho_{\lambda}) $ denotes the quantum probability of outcome $b$ when measurement $B_y$ is performed on local hidden state (LHS) $\rho_{\lambda}$. Hence, the box $p(ab|xy)$ will be called steerable correlation iff it does not have a LHV-LHS model. Steerable correlation is certified through the violation of a steering inequality \cite{CJW+09}.


Various quantifiers of EPR steering have been proposed till date and for the purpose of the present study, now we discuss in brief about steerable weight \cite{SNC14},
which is a convex steering monotone \cite{GA15}.
Consider the following decomposition of an arbitrary assemblage $\{ \sigma_{a|x} \}_{a,x}$:
\begin{equation}
\sigma_{a|x} = p_s \sigma^{S}_{a|x} + (1-p_s) \sigma^{US}_{a|x} \hspace{0.5cm} \forall a, x,
\end{equation}
where $0 \leq p_s \leq 1$, $\sigma^{S}_{a|x}$ is a steerable assemblage and $\sigma^{US}_{a|x}$ is an element of unsteerable assemblage having LHS model. The weight of the steerable part $p_s$ minimized over all possible decompositions of the given assemblage $\{ \sigma_{a|x} \}_{a,x}$ gives the steerable weight $SW(\{ \sigma_{a|x} \}_{a,x})$ of that assemblage.

\subsection{Self-testing via quantum steering}
DI self-testing of quantum states through the violation of a Bell inequality occurs only for pure entangled states because it requires 
the observation of an extremal nonlocal correlation \cite{GKW+18}. 
Therefore, in the self-testing problem, certifying a particular pure entangled state is of interest. 

Suppose Alice and Bob want to self-test a particular pure entangled state $\ket{\tilde{\psi}}_{AB} \in \mathcal{H}'_A \otimes \mathcal{H}_B$
from the steerable assemblage arising from the state $\ket{\psi}_{AB} \in \mathcal{H}_A \otimes \mathcal{H}_B$ and measurement operators $\{M_{a|x}\}_{a,x}$ on Alice's side in the aforementioned $1$SDI scenario. Then the assemblage self-tests the pure entangled state 
$\ket{\tilde{\psi}}_{AB}$ if there exists an isometry on Alice' side $\Phi$: $\mathcal{H}_A  \rightarrow \mathcal{H}_A \otimes \mathcal{H}'_A$
such that
\begin{eqnarray}
 \Phi(\ket{\psi}_{AB})&=&\ket{junk}_A \otimes \ket{\tilde{\psi}}_{AB}, \nonumber \\
 \Phi(M_{a|x} \otimes \openone \ket{\psi}_{AB})&=&\ket{junk}_A \otimes (\tilde{M}_{a|x} \otimes \openone ) \ket{\tilde{\psi}}_{AB},
\end{eqnarray}
where $\ket{junk}_A \in \mathcal{H}_A$ and $\{\tilde{M}_{a|x}\}_{a,x}$ are the measurement operators acting on the Hilbert space $\mathcal{H}'_A$. In Ref. \cite{SH16}, self-testing of maximally entangled two-qubit state based on 
the steerable assemblage arising from the two-setting steering scenario was proposed.

Analogous to the DI self-testing scheme based on the maximal violation of a Bell inequality, the measurement correlations $p(ab|xy)$ = $Tr ( \Pi_{b|y} \sigma_{a|x} )$ arising from the assemblage can also be used to self-test the particular pure entangled state. In Refs. \cite{SH16,GWK17}, self-testing of the maximally entangled two-qubit state based on 
the maximal steerable correlation was proposed through the maximal violation of a steering inequality. That is, it was shown 
that the maximal violation of the linear steering inequality \cite{CJW+09},
\begin{equation}\label{LSI}
\braket{A_0  \sigma_z} + \braket{A_1  \sigma_x} \le \sqrt{2},
\end{equation}
self-tests a maximally entangled two-qubit state. Here $\langle A_x B_y \rangle =  \sum_{a,b} (-1)^{a\oplus b} p(ab|xy)$ with $B_y$ being equal to $\sigma_z$ or $\sigma_x$.

\section{A $1$SDI Self-testing Scheme for any pure bipartite qubit entangled state}
Here we consider a steering scenario in which Alice performs two black-box dichotomic measurements 
and Bob performs two qubit measurements in mutually unbiased bases for self-testing any pure two-qubit
entangled state in a $1$SDI way. For this steering scenario, a necessary and sufficient condition for quantum steering 
in the form of steering inequality has been proposed 
in Ref. \cite{CFF+15}. This steering inequality is given by
\begin{align}
&\sqrt{\langle (A_{0} + A_{1}) B_{0} \rangle^2 + \langle (A_{0} + A_{1}) B_{1} \rangle^2 } \nonumber \\
&+\sqrt{\langle (A_{0} - A_{1}) B_{0} \rangle^2 + \langle (A_{0} - A_{1}) B_{1} \rangle^2 } \leq 2, 
\label{chshst}
\end{align}
where $\langle A_x B_y \rangle =  \sum_{a,b} (-1)^{a\oplus b} p(ab|xy)$. This inequality is called the analogous CHSH inequality for quantum steering. We will call this analogous CHSH inequality for steering as CFFW (Cavalcanti-Foster-Fuwa-Wiseman) inequality afterwards.

Our self-testing scheme for certifying any pure bipartite qubit entangled state is based on the violation 
of the CFFW inequality with the maximal violation of another steering inequality, {\it i.e.}, the fine-grained 
inequality (FGI), proposed in Ref. \cite{PKM14}.
The FGI for steering is given by,
\begin{equation}
P(b_{B_{0}}\mid a_{A_{0}})+P(b_{B_{1}}\mid a_{A_{1}})\leq 1+\frac{1}{\sqrt{2}}.
\label{FGI}
\end{equation}
$P(b_{B_{0}}\mid a_{A_{0}})$ is the probability of obtaining the outcome $b$ when Bob performs $B_{0}$ measurement given that Alice has obtained the outcome $a$ by performing the measurement $A_{0}$; $P(b_{B_{1}}\mid a_{A_{1}})$ is the probability of obtaining the outcome $b$ when Bob performs $B_{1}$ measurement given that Alice has obtained the outcome $a$ by performing the measurement $A_{1}$. 
Interestingly, we will now demonstrate that the maximum   violation of the FGI by a shared two-qubit state is achieved if and only if the shared state is any pure (maximally or non-maximally) entangled two-qubit state.

\begin{Lemma}
Suppose the trusted party, Bob, performs projective qubit measurements in mutually unbiased bases (as we will consider CFFW inequality for steering later) corresponding to the operators $B_{0}= \ket{0}\bra{0}-\ket{1}\bra{1}$ and  $B_{1}=\ket{+}\bra{+}-\ket{-}\bra{-}$, where $\{|0\rangle, |1\rangle\}$ is an orthonormal basis and $\{|+\rangle, |-\rangle\}$ is another orthonormal basis given by, $\ket{+}=\frac{1}{\sqrt{2}}(\ket{0}+\ket{1})$ and $\ket{-}=\frac{1}{\sqrt{2}}(\ket{0}-\ket{1})$. Then the correlation violates FGI maximally if and only if the two-qubit state has the form,
\begin{equation}
\label{state1}
\ket{\psi(\theta)}=\cos{\theta} \ket{00}+\sin{\theta} \ket{11} \hspace{0.5cm}
0 < \theta < \frac{\pi}{2} 
\end{equation}
up to local unitary transformations and Alice performs projective measurements corresponding to the two operators given by, $A_{0}= \ket{0}\bra{0}-\ket{1}\bra{1}$ and $A_{1}=\cos{2\theta}(\ket{0}\bra{0}-\ket{1}\bra{1})+\sin{2\theta}(\ket{+}\bra{+}-\ket{-}\bra{-})$ (or their local unitary equivalents).
\end{Lemma}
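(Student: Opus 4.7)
My plan is to observe that each of the two conditional probabilities appearing on the left hand side of the FGI is bounded above by $1$, so the trivial quantum upper bound is $\mathrm{LHS}\le 2$; the lemma claims this bound is attained if and only if the state and measurements are in the stated form. I would therefore split the argument into (a) verifying saturation for the stated data (the ``if'' direction) and (b) showing that saturation rigidly fixes the state and measurements (the ``only if'' direction).

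For the \emph{if} direction I would directly compute Bob's conditional states under Alice's measurements. The $-1$ eigenvector of $A_{0}=\sigma_{z}$ is $\ket{1}$, and projecting Alice on it produces Bob's conditional state $\ket{1}$, the $-1$ eigenstate of $B_{0}=\sigma_{z}$, so $P(b=-1\mid a=-1,A_{0},B_{0})=1$. Similarly, the $-1$ eigenvector of $A_{1}=\cos 2\theta\,\sigma_{z}+\sin 2\theta\,\sigma_{x}$ is $\sin\theta\ket{0}-\cos\theta\ket{1}$; projecting Alice on it sends $\ket{\psi(\theta)}$ to a vector proportional to $\ket{0}-\ket{1}=\sqrt{2}\,\ket{-}$ on Bob's side, so $P(b=-1\mid a=-1,A_{1},B_{1})=1$ with the matching outcome labels. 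The two terms therefore add to $2$, saturating the bound.

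For the \emph{only if} direction, $\mathrm{LHS}=2$ forces both conditional probabilities to equal $1$, and hence forces Bob's conditional states $\rho_{a^{*}\mid 0}$ and $\rho_{a^{*}\mid 1}$ to be \emph{pure} eigenstates of $\sigma_{z}$ and $\sigma_{x}$, respectively, for some common outcome $a^{*}$. This in turn forces the shared state to be pure (any nontrivial mixture would yield a mixed Bob-side conditional state unless every pure-state branch independently produced the same eigenstate; the two-MUB constraint then pushes the decomposition to a single branch) and Alice's measurements to be projective (a nontrivial POVM decomposition similarly cannot give sharp conditional states). Writing $\ket{\psi}$ in Schmidt form on Alice's side as $\cos\theta\ket{00}+\sin\theta\ket{11}$ with $0<\theta<\pi/2$ (separable states give $x$-independent conditional states on Bob and are Maassen--Uffink-bounded by $1+1/\sqrt{2}<2$), I would parametrise Alice's projector for outcome $a^{*}$ of $A_{0}$ as $c_{0}\ket{0}+c_{1}\ket{1}$. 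Bob's unnormalised conditional state is then $\bar c_{0}\cos\theta\ket{0}+\bar c_{1}\sin\theta\ket{1}$, and demanding proportionality to $\ket{0}$ or $\ket{1}$ forces $c_{1}=0$ or $c_{0}=0$; hence $A_{0}=\pm\sigma_{z}$ (taking $A_{0}=\sigma_{z}$ up to outcome-label flip). Repeating the same coefficient-matching for $A_{1}$ with the Bob-side target $\ket{\pm}$ yields $\bar c_{0}\cos\theta=\pm\bar c_{1}\sin\theta$, which (together with orthogonality of the other eigenvector) determines both eigenvectors of $A_{1}$ and reconstructs $A_{1}=\cos 2\theta\,\sigma_{z}+\sin 2\theta\,\sigma_{x}$, as claimed.

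The step I expect to be most delicate is the purity and projectivity reduction above. One must show that no convex combination of pure two-qubit states, nor any genuine POVM extension, can reproduce the same sharp conditional states on Bob for two different MUB settings; this rigidity argument, leaning on the fact that $\sigma_{z}$ and $\sigma_{x}$ are mutually unbiased, is what ties the two single-observable constraints together and forces $A_{0}$ and $A_{1}$ to be expressible in terms of the common parameter $\theta$ of the shared state, up to the local unitary and outcome-labelling freedoms allowed by the statement.
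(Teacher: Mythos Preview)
Your ``if'' direction and the post-purity determination of $A_{0},A_{1}$ by coefficient matching are essentially what the paper does. The genuine divergence is in how the ``only if'' direction establishes that the shared two-qubit state must be \emph{pure}. The paper does not argue directly from mutual unbiasedness as you propose; instead it observes that the maximal FGI violation forces Bob's normalised conditional states $\rho_{a|x}$ to be pure eigenstates of $B_{0}$ and $B_{1}$, hence the assemblage has steerable weight $SW=1$, and then invokes the Lewenstein--Sanpera decomposition of an arbitrary two-qubit state together with convexity of $SW$ to show that any \emph{mixed} two-qubit entangled state yields $SW<1$. That is heavier machinery than your sketch, but each step becomes a citation rather than a fresh rigidity argument.

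Your elementary route---arguing directly that no nontrivial mixture and no genuine POVM can produce sharp Bob-side conditional states simultaneously for two MUBs---is more self-contained, and you also address the POVM-to-projective reduction, which the paper essentially glosses over. However, the step you flag as delicate is exactly the one the paper sidesteps with the $SW$/Lewenstein--Sanpera detour, and in your version it is not yet closed: you work with a \emph{single} Alice outcome $a^{*}$, so you only know two of the four conditional states are pure, and from that alone it is not immediate that the global state is pure (mixtures whose branches all steer Bob to the same $\sigma_{z}$-eigenstate for outcome $a^{*}$ of $A_{0}$ need to be excluded via the second, $\sigma_{x}$, constraint). If you pursue your line you must make that MUB-based exclusion explicit; otherwise the paper's steerable-weight argument is the cleaner way to finish.
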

\begin{proof}
It can be checked by simple calculation that, for $B_0$, $B_1$ mentioned above, if Alice and Bob share the state $\ket{\psi(\theta)}$ given by Eq.(\ref{state1}) and Alice performs  projective measurements corresponding to the operators $A_{0}= \ket{0}\bra{0}-\ket{1}\bra{1}$ and $A_{1}=\cos{2\theta}(\ket{0}\bra{0}-\ket{1}\bra{1})+\sin{2\theta}(\ket{+}\bra{+}-\ket{-}\bra{-})$ (or their local unitary equivalents), then the value of left hand side of FGI is $2$. This numerical value $2$ is the maximum violation of FGI since the algebraic maximum of left hand side of FGI is $2$.

Violation of any steering inequality implies that the shared state is steerable and, hence, entangled. Therefore, the shared bipartite qubit state giving rise to the maximum violation of FGI is either a pure or a mixed entangled state. Note that the left hand side of FGI is the sum of two conditional probabilities and the magnitude of its maximum quantum violation is $2$. Hence, maximum quantum violation of FGI implies that each of the two conditional probabilities appearing in the left hand of FGI given by Eq.(\ref{FGI}) is equal to $1$, i.e. $P(b_{B_{0}}\mid a_{A_{0}})$ = $1$ and $P(b_{B_{1}}\mid a_{A_{1}})$ = $1$. This implies that, by performing measurements of the observables corresponding to the operators $A_0$ and $A_1$ on her particle, Alice can predict with certainty the outcomes of Bob's two different measurements of the two observables corresponding to the operators $B_0$ and $B_1$, respectively, without interacting with Bob's particle, where $B_0$ and $B_1$ are two mutually unbiased qubit measurements as described earlier. 
This is nothing but the EPR paradox\footnote{EPR paradox occurs when Alice’s pair of local quantum measurements prepare two different set of quantum states at Bob’s end which are eigenstates of two noncommuting observables.} \cite{EPR35}. 
Suppose, ${\rho}_{0|0}$ and $\rho_{1|0}$ denote the normalised conditional states at Bob's end when Alice gets outcome $a=0$ and $a=1$, respectively, by performing the measurement $A_0$. The states $\rho_{0|0}$ and $\rho_{1|0}$ should be the eigenstates of the operator $B_0$ as maximum quantum violation of FGI implies Bob's conditional probability $P(b_{B_{0}}\mid a_{A_{0}})$ = $1$ for $a=0$ and for $a=1$. Again, suppose, $\rho_{0|1}$ and $\rho_{1|1}$ denote the normalised conditional states at Bob's end while Alice gets the outcome $a=0$ and $a=1$, respectively, by performing the measurement $A_1$. Following similar arguments, it can be shown that the states $\rho_{0|1}$ and $\rho_{1|1}$ should be the eigenstates of the operator $B_1$. Hence, all the four conditional states at Bob's side $\rho_{0|0}$, $\rho_{1|0}$, $\rho_{0|1}$ and $\rho_{1|1}$ are pure. 
If the shared state between Alice and Bob is a pure entangled state, then it has been shown that the four conditional states at Bob's side are pure \cite{SharpContra}. Now in the following we prove that these pure steerable assemblages can not be obtained from any mixed state, hence showing that the maximal violation of FGI can be obtained only from a pure entangled state.

Let us assume that $\sigma_{0|0}$, $\sigma_{1|0}$, $\sigma_{0|1}$ and $\sigma_{1|1}$ denote the elements of assemblage prepared at Bob's side which corresponds to maximum violation of FGI. Each element $\sigma_{a|x}$ of the assemblage is related to the normalised conditional state $\rho_{a|x}$ at Bob's side through the relation given by, $\sigma_{a|x} = p(a|x) \rho_{a|x}$, where $p(a|x)$ is the conditional probability of getting the outcome $a$ when Alice performs the measurement $A_x$; $x \in \{0, 1\}$; $a \in \{0, 1\}$. Since each of the conditional states $\rho_{a|x}$ are pure, they cannot be expressed as convex mixture of two different states. Moreover, these conditional states are associated with the steerable assemblage $\{ \sigma_{a|x} \}_{a,x}$ giving rise to maximum violation of FGI. Hence, steerable weight of the assemblage $\{ \sigma_{a|x} \}_{a,x}$ giving rise to maximum violation of FGI must be $1$.

According to Lewenstein-Sanpera decomposition, any bipartite qubit state $\rho$ has a unique decomposition in the form \cite{LS98} 
\begin{equation}
\rho = \mu \rho^{ent}_{pure} + (1-\mu) \rho^{sep},
\label{unique}
\end{equation}
where $0 \leq \mu \leq 1$, $\rho^{ent}_{pure}$ is a bipartite qubit pure entangled state and $\rho^{sep}$ is a bipartite qubit separable state. Here, $\mu = 1$ implies that $\rho$ is a pure entangled state and $\mu = 0$ implies that $\rho$ is separable. So, for any bipartite qubit mixed entangled state $\rho^{m}$, $\mu \neq 1$ and $\mu \neq 0$, i. e., $0 < \mu < 1$. Consider $\{ \sigma^m_{a|x} \}_{a, x}$ denotes an arbitrary assemblage produced from $\rho^{m}$ when Alice performs measurements $\{ M_{a|x} \}_{a, x}$. Here $\sigma^m_{a|x} = \tr_{A} [( M_{a|x} \otimes \openone) \rho^m ]$, $\forall \sigma^m_{a|x} \in \{ \sigma^m_{a|x} \}_{a, x}$. Since $\rho^m$ can always be expressed in the form given in Eq. (\ref{unique}),
\begin{equation}
\rho^m = \tilde{\mu} \tilde{\rho}^{ent}_{pure} + (1- \tilde{\mu}) \tilde{\rho}^{sep},
\label{uniquemixed}
\end{equation} 
with $0 < \tilde{\mu} < 1$, we have for all $a$ and $x$
\begin{align}
\sigma^m_{a|x} & = \tr_{A} [( M_{a|x} \otimes \openone) (\tilde{\mu} \tilde{\rho}^{ent}_{pure} + (1- \tilde{\mu}) \tilde{\rho}^{sep}) ] \nonumber \\
& = \tilde{\mu} \tr_{A} [( M_{a|x} \otimes \openone) \tilde{\rho}^{ent}_{pure} ] \nonumber\\
&+ (1 - \tilde{\mu}) \tr_{A} [( M_{a|x} \otimes \openone) \tilde{\rho}^{sep} ] \nonumber \\
& = \tilde{\mu} \tilde{\sigma}^{ent}_{{pure}_{a|x}} + (1 - \tilde{\mu}) \tilde{\sigma}^{sep}_{a|x},
\end{align}
$\tilde{\rho}^{ent}_{pure}$ is a bipartite qubit pure entangled state, $\tilde{\rho}^{sep}$ is a bipartite qubit separable state, $\tilde{\sigma}^{ent}_{{pure}_{a|x}}$ is an element of the assemblage $\{ \tilde{\sigma}^{ent}_{{pure}_{a|x}} \}_{a,x}$ produced from the bipartite qubit pure entangled state $\tilde{\rho}^{ent}_{pure}$ when Alice performs measurements $\{ M_{a|x} \}_{a, x}$ and $\tilde{\sigma}^{sep}_{a|x}$ is an element of the assemblage $\{ \tilde{\sigma}^{sep}_{a|x} \}_{a,x}$ produced from the bipartite qubit separable state $\tilde{\rho}^{sep}$ when Alice performs measurements $\{ M_{a|x} \}_{a, x}$. Since, steerable weight is a convex steering monotone \cite{GA15}
we have
\begin{align}
&SW(\{\sigma^m_{a|x} \}_{a,x}) \nonumber \\
&= SW( \tilde{\mu} \{ \tilde{\sigma}^{ent}_{{pure}_{a|x}} \}_{a,x} + (1 - \tilde{\mu}) \{ \tilde{\sigma}^{sep}_{a|x} \}_{a,x}) \nonumber \\
& \leq \tilde{\mu} SW( \{ \tilde{\sigma}^{ent}_{{pure}_{a|x}} \}_{a,x}) + (1 - \tilde{\mu}) SW(\{ \tilde{\sigma}^{sep}_{a|x} \}_{a,x}), 
\label{eqnew}
\end{align}
where $SW(.)$ denotes the steerable weights of the correspondings assemblages. 
As, $\{ \tilde{\sigma}^{sep}_{a|x} \}_{a,x}$ is the assemblage produced from a separable state, it is an unsteerable assemblage and hence $SW(\{ \tilde{\sigma}^{sep}_{a|x} \}_{a,x}) = 0$ \cite{GA15}. 
On the other hand, $0 \leq SW( \{ \tilde{\sigma}^{ent}_{{pure}_{a|x}} \}_{a,x}) \leq 1$. Hence, from Eq.(\ref{eqnew}) we get
\begin{equation}
SW(\{\sigma^m_{a|x} \}_{a,x}) \leq \tilde{\mu} < 1
\end{equation}
We have, therefore, proved that steerable weight of an arbitrary bipartite qubit mixed entangled state cannot be equal to $1$. On the other hand, it has been shown that the steerable weight of the assemblage produced by performing appropriate measurements on an arbitrary pure entangled state is equal to $1$ \cite{SNC14}.
Since maximum violation of FGI imples that the corresponding assemblage have steerable weight equal to $1$, the maximum violation of FGI is achieved only if the shared bipartite qubit state between Alice and Bob is a pure entangled state.

The general form of any bipartite qubit pure entangled state is given by, $\ket{\psi_{p}}=\ket{\psi(\theta)}=\cos{\theta} \ket{00}+\sin{\theta} \ket{11}$, where $0 < \theta < \frac{\pi}{2}$, up to local unitary transformations. Maximum violation of FGI implies that $P(b_{B_{0}}\mid a_{A_{0}})$ = $1$ and $P(b_{B_{1}}\mid a_{A_{1}})$ = $1$. Let us assume that $a = 0$ and $b = 0$. In this case, it can be easily checked that, for  $B_0$ mentioned above, $P(b_{B_{0}}\mid a_{A_{0}})$ = $1$ for the aforementioned state $\ket{\psi_{p}}$ implies that Alice performs projective measurement of the observable corresponding to the operator
$A_{0}= \ket{0}\bra{0}-\ket{1}\bra{1}$. 
Moreover, it can also be checked that $P(b_{B_{1}}\mid a_{A_{1}})$ = $1$ using aforementioned $B_1$ by the state $\ket{\psi_{p}}$ implies that Alice performs projective measurement of the observable corresponding to the operator $A_{1}=\cos{2\theta}(\ket{0}\bra{0}-\ket{1}\bra{1})+\sin{2\theta}(\ket{+}\bra{+}-\ket{-}\bra{-})$. Hence the claim.

For other possible outcomes ($a$ and $b$) the proof is similar.
\end{proof}

\begin{Lemma}
 The maximal violation of FGI, i.e., $2$ is obtained in our $1$SDI scenario where Bob performs the two mutually unbiased qubit measurements corresponding to the operators given in the previous Lemma $1$.
 Let this maximal violation be achieved from the assemblage arising from the pure state $\ket{\psi}_{AB} \in \mathcal{H}_A \otimes \mathcal{H}_B$
 (where the dimension of $\mathcal{H}_B$ is $2$)
 and measurement operators $\{M_{a|x}\}_{a,x}$ on Alice's side. Then there exists an isometry on Alice's side $\Phi$: $\mathcal{H}_A  \rightarrow \mathcal{H}_A \otimes \mathcal{H}'_A$, where  the dimension of $\mathcal{H}'_A$ is $2$, such that
\begin{align}
 \Phi(\ket{\psi}_{AB})&=\ket{junk}_A \otimes \ket{\psi(\theta)}_{AB}, \nonumber \\
 \Phi(M_{a|x} \otimes \openone \ket{\psi}_{AB})&=\ket{junk}_A \otimes (\tilde{M}_{a|x} \otimes \openone ) \ket{\psi(\theta)}_{AB},
\end{align}
where  $\ket{junk}_A \in \mathcal{H}_A$, $\ket{\psi(\theta)}_{AB}$ is given by Eq. (\ref{state1}) and $\{\tilde{M}_{a|x}\}_{a,x}$ are the measurement 
operators on Alice's side corresponding to the observables given in Lemma $1$.
\end{Lemma}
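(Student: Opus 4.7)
My plan is to construct an explicit SWAP-type isometry on Alice's side that transports the effective qubit supporting the target state $\ket{\psi(\theta)}$ onto the ancilla $\mathcal{H}'_A$, and which simultaneously intertwines Alice's POVM elements with the canonical target observables $\tilde M_{a|x}$. The argument proceeds in three stages: fix the Schmidt structure of $\ket{\psi}_{AB}$, pin down the action of Alice's unknown POVMs on the Schmidt support, and then verify a standard SWAP circuit.

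First, Lemma~1 forces the relevant conditional states on Bob's side to be pure eigenstates of $B_0$ and $B_1$, so the assemblage (probabilities and conditional states) coincides with the canonical one produced by $\ket{\psi(\theta)}$ and the observables $\tilde M_{a|x}$. Since $\dim\mathcal{H}_B=2$ and $\ket{\psi}_{AB}$ is pure, its Schmidt decomposition has rank at most two; after a fixed local unitary on the trusted side (absorbed into the $B_y$'s), we may write
\begin{equation*}
\ket{\psi}_{AB} = \cos\theta\, \ket{\alpha_0}\ket{0} + \sin\theta\, \ket{\alpha_1}\ket{1},
\end{equation*}
with $\{\ket{\alpha_0},\ket{\alpha_1}\}$ orthonormal in $\mathcal{H}_A$, spanning a two-dimensional subspace $V$, and the same Schmidt angle $\theta$ as in Lemma~1.

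Next I would use the remote-preparation identity $(M_{a|x}\otimes\openone)\ket{\psi}_{AB}=\sqrt{p(a|x)}\,\ket{\phi_{a|x}}\otimes\ket{b_{a|x}}$ (which holds whenever the corresponding FGI conditional probability equals $1$) to force the relevant POVM elements to annihilate specific vectors in $V$. Combining this with the completeness $\sum_a M_{a|x}=\openone$, POVM positivity $0\le M_{a|x}\le\openone$, and the observed outcome probabilities $p(a|x)$, I would show that the compressions $M_{a|x}|_V$ coincide with the canonical projectors of $\tilde M_{a|x}$, and moreover that $M_{a|x}$ carries no off-diagonal blocks mixing $V$ with $V^\perp$. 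One then defines Hermitian unitaries $Z_A$ and $X_A$ on $\mathcal{H}_A$ that agree with the canonical Pauli operators on $V$ (in the $\{\ket{\alpha_0},\ket{\alpha_1}\}$ basis) and act as the identity on $V^\perp$, with the identification $X_A|_V=(A_1|_V-\cos 2\theta\, Z_A|_V)/\sin 2\theta$.

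Finally, I would apply the standard SWAP gadget built from controlled-$Z_A$ and controlled-$X_A$ with Hadamards on the ancilla to the enlarged state $\ket{\psi}_{AB}\ket{0}_{A'}$, with $\mathcal{H}'_A=\mathbb{C}^2$ prepared in $\ket{0}_{A'}$. A line-by-line computation structurally identical to the one used for 1SDI self-testing of the maximally entangled two-qubit state in \cite{SH16,GWK17} transports the qubit in $V$ onto $\mathcal{H}'_A$, leaving $\ket{\alpha_0}_A$ as the junk and outputting $\ket{\alpha_0}_A\otimes\ket{\psi(\theta)}_{A'B}$; the same circuit intertwines $(M_{a|x}\otimes\openone)\ket{\psi}_{AB}$ with $(\tilde M_{a|x}\otimes\openone)\ket{\psi(\theta)}_{A'B}$ once $M_{a|x}|_V$ has been identified with $\tilde M_{a|x}$. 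The main obstacle is the second stage: ruling out the possibility that the POVMs mix $V$ with $V^\perp$ requires a careful positivity argument on the $2\times 2$ block spanned by the relevant eigenvector in $V$ and any would-be leakage direction in $V^\perp$, rather than purely algebraic manipulation. Once that is in place, the SWAP calculation is mechanical.
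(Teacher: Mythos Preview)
Your proposal is correct in spirit but follows a genuinely different route from the paper's own proof. The paper does \emph{not} use the Schmidt decomposition or a SWAP gadget built from effective Pauli operators $Z_A,X_A$. Instead it invokes the Jordan--Masanes lemma \cite{Mas06}: given two $\pm1$-valued Hermitian observables $A_0,A_1$ on $\mathcal{H}_A$, one decomposes $\mathcal{H}_A=\bigoplus_i\mathcal{H}^i_A$ into at-most-two-dimensional blocks preserved by both observables, writes $\ket{\psi}_{AB}=\bigoplus_i\sqrt{q_i}\,\ket{\psi}^i_{AB}$ and $M_{a|x}=\bigoplus_i M^i_{a|x}$ accordingly, applies Lemma~1 inside each $2\times2$ block to force $\ket{\psi}^i_{AB}=\cos\theta\ket{2i,0}+\sin\theta\ket{2i+1,1}$, and then reads off the isometry $\ket{2k,0}_{AA'}\mapsto\ket{2k,0}_{AA'}$, $\ket{2k+1,0}_{AA'}\mapsto\ket{2k,1}_{AA'}$ directly.

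Your Schmidt-based argument is more elementary in that it exploits $\dim\mathcal{H}_B=2$ from the outset, immediately isolating a single two-dimensional support $V\subset\mathcal{H}_A$; this bypasses the Jordan lemma entirely and handles genuine POVMs rather than $\pm1$ observables. The trade-off is exactly the obstacle you flag: you must prove by hand that the $M_{a|x}$ do not leak $V$ into $V^\perp$, whereas the paper gets block-diagonality for free from the Jordan decomposition. One caution: your phrase ``after a fixed local unitary on the trusted side (absorbed into the $B_y$'s)'' is not quite licit in the $1$SDI setting, since Bob's measurements are fixed; the honest argument is that the FGI constraint $\rho_{0|0}\in\{\ket{0}\!\bra{0},\ket{1}\!\bra{1}\}$ already forces Bob's Schmidt basis to coincide with the $B_0$ eigenbasis, so no absorption is needed.
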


\begin{proof}
 Let us recall a lemma given in Ref. \cite{Mas06} which states that given two Hermitian operators $A_0$ and $A_1$ with eigenvalues $\pm 1$
 acting on a Hilbert space $\mathcal{H}$, there is a decomposition of $\mathcal{H}$ as a direct sum of subspaces $\mathcal{H}_i$ of dimension $d\le2$ each, such that both $A_0$ and $A_1$ act within
each $\mathcal{H}_i$, that is, they can be written as $A_0= \oplus_i A_0^i$ and
$A_1= \oplus_i A_1^i$, where $A_0^i$ and $A_1^i$ act on  $\mathcal{H}_i$.

 In general, in our steering scenario, any shared bipartite state lies in $\mathcal{B}(\mathcal{H}_A \otimes \mathcal{H}_{B})$ where the dimension of $\mathcal{H}_{A}$ (the untrusted side) is `$d$' and the dimension of $\mathcal{H}_{B}$ (the trusted side) is $2$. 
 From the above lemma it follows that the measurement observables acting on  $\mathcal{H}_{A}$ act within
 each subspace $\mathcal{H}^i_{A}$ with dimension $d \le 2 $ of $\mathcal{H}_{A}$.
 Note that 
\begin{equation}
\mathcal{H}_A \otimes \mathcal{H}_{B} = (\oplus_{i} \mathcal{H}^i_{A}) \otimes \mathcal{H}_{B}
\simeq \oplus_i (\mathcal{H}_{i} \otimes \mathcal{H}_{B}).
\label{HD}
\end{equation}
It follows that the pure state $\ket{\psi}_{AB} \in \mathcal{H}_A \otimes \mathcal{H}_B$
 and the measurement operators $\{M_{a|x}\}_{a,x}$ that give rise to the maximal violation
 of FGI can be decomposed as 
 \be \label{pdec}
 \ket{\psi}_{AB}=\oplus_i \sqrt{q_{i}} \ket{\psi}^i_{AB},
 \ee
 with $\sum_i q_{i}=1$, where $\ket{\psi}^i_{AB}$ is a $2 \times 2$ pure state and
 \be \label{odec}
 M_{a|x}=\oplus_i  M^i_{a|x}, 
 \ee
 where $M^i_{a|x}$ is an operator acting 
 on $\mathcal{H}^i_{A}$ of $d = 2$. 
 
 From our Lemma $1$, it follows that each $\ket{\psi}^i_{AB}$ in Eq. (\ref{pdec}) should be of the following form:
\begin{equation}
\ket{\psi}^i_{AB}=\cos{\theta} \ket{2i,0}+\sin{\theta} \ket{2i+1,1},
\end{equation} 
and for $x=0$, $M^i_{a|x}$  in Eq. (\ref{odec}) are given by $M^i_{0|0}=\ketbra{2i}{2i}$ and $M^i_{1|0}=\ketbra{2i+1}{2i+1}$. 

Alice can append a local ancilla qubit prepared in the state $\ket{0}_A'$ and look for a local isometry $\Phi$ such that 
\be
(\Phi \otimes \openone) \ket{\psi}_{AB}\ket{0}_A'=\ket{junk}_A \otimes \ket{\psi(\theta)}_{A'B},
\ee
where $\ket{junk}_A$ is the junk state and $\ket{\psi(\theta)}_{A'B}$ is the state given by Eq. (\ref{state1}).
This can be achieved for $\Phi$ defined by the map
\ba
\Phi \ket{2k, 0}_{AA'} &\longmapsto&   \ket{2k, 0}_{AA'},\\
\Phi \ket{2k+1, 0}_{AA'} &\longmapsto&   \ket{2k, 1}_{AA'}.
\ea
\end{proof}

Thus, up to local isometry on Alice's side the maximal violation of FGI certifies any pure two-qubit entangled state in our $1$SDI scenario, because any pure two-qubit entangled state can always be written in the form given by Eq. (\ref{state1}) following Schmidt decomposition \cite{Per95, HJW93}. In order to identify which pure entangled two-qubit state has been certified, 
we consider violation of the CFFW inequality (\ref{chshst}).
For the state $\ket{\psi(\theta)}$ given by Eq.(\ref{state1}) with the aforementioned measurement settings on Alice's and Bob's side,  violation of the CFFW inequality is given by,
\begin{equation}\label{CFFWv}
p=\sqrt{4 \sin ^4(\theta )+\sin ^4(2 \theta )}+\sqrt{\sin ^4(2 \theta )+4 \cos ^4(\theta )}.
\end{equation}
Note that concurrence which is a measure of entanglement \cite{CKW00}  of the state $\ket{\psi(\theta)}$ given by Eq. (\ref{state1}) turns out to be $C=\sin{2\theta}$.
It is now readily seen that the violation of CFFW inequality (\ref{chshst}) for the state $\ket{\psi(\theta)}$  given by Eq. (\ref{state1}) and concurrence of this family of states are functions of $\theta$.   Hence, one can easily find out the concurrence of the pure two-qubit state, which is self-tested by the maximum violation of FGI, from the violation ofthe  CFFW inequality. In other words, from the violation of the CFFW inequality one can particularly identify which pure two-qubit entangled state has been self-tested.

The variation of the violation `$p$' with concurrence `$C$' is monotonic and continuous and it is shown in $\figurename\ref{conc_violation_plot}$. Hence, from the violation of the CFFW inequality given by Eq.(\ref{chshst}), one can certify whether the pure two-qubit entangled state is maximally entangled or non-maximally entangled. Moreover, from the violation of CFFW inequality using the plot presented in $\figurename\ref{conc_violation_plot}$, one can find out which particular pure two-qubit entangled state has been self-tested.

\begin{center}
\begin{figure}[!t]
\resizebox{8cm}{4cm}{\includegraphics{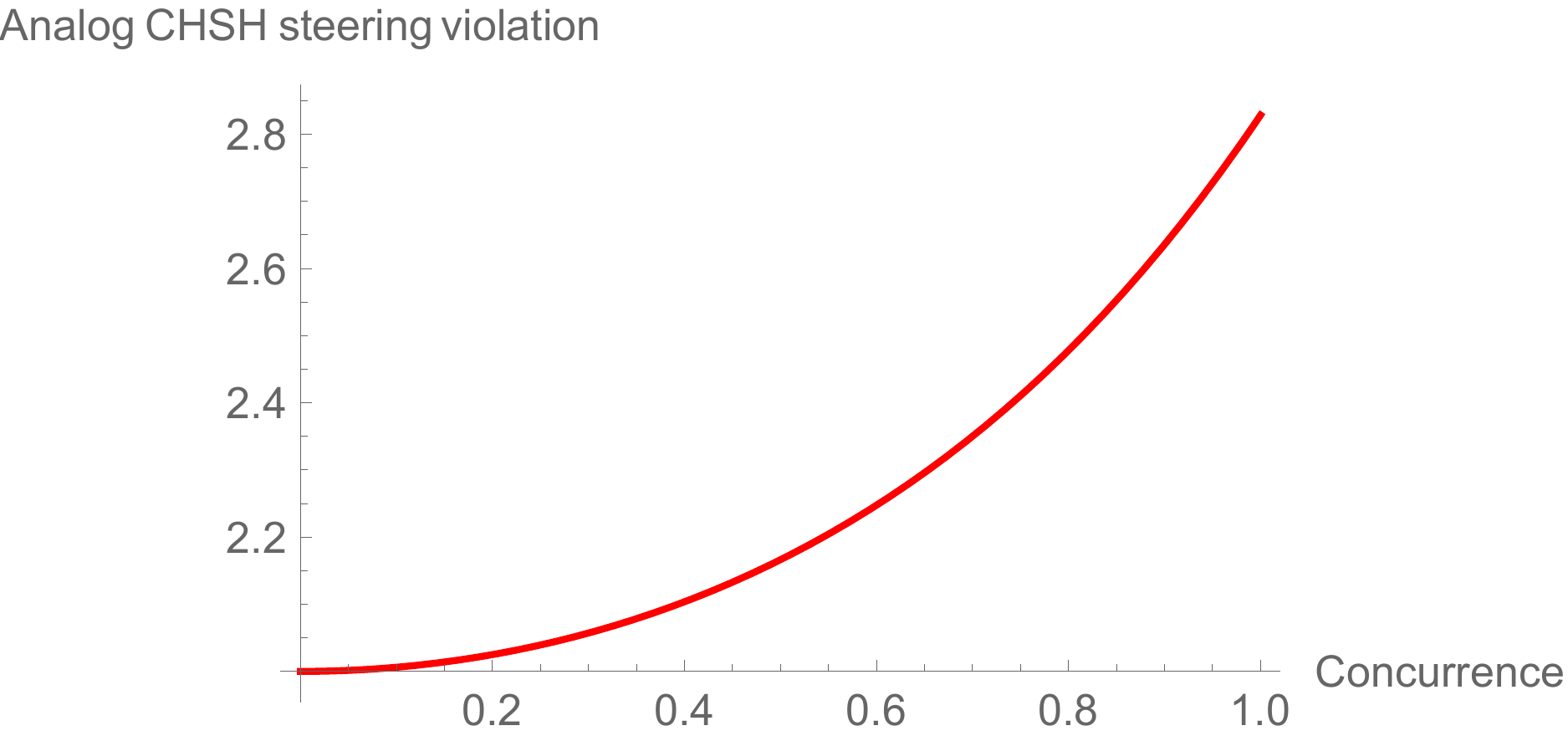}}
\caption{\footnotesize  Along $x$-axis we plot concurrence of the state $\ket{\psi(\theta)}$ and along $y$-axis, we plot the LHS of the analogue CHSH steering inequality.}
\label{conc_violation_plot}
\end{figure}
\end{center}

Therefore, we can state the following self-testing result.

\begin{Result}
The maximal violation of FGI self-tests any pure two-qubit entangled state. On the other hand, magnitude of the violation of CFFW inequality for the measurements that give rise to the maximal violation of FGI certifies the amount of entanglement of the self-tested pure two-qubit entangled state, i.e., magnitude of the violation of CFFW inequality for the measurements that give rise to the maximal violation of FGI identifies which particular pure two-qubit entangled state has been self-tested by FGI.
\end{Result}

Note that for the maximally entangled state, our scheme for $1$SDI self-testing reduces to the $1$SDI self-testing scheme based on the maximal 
violation of the steering inequality given in Eq. (\ref{LSI}).

For the above self-testing scheme, the full knowledge of the measurement correlations $p(ab|xy)$ is needed. We will now propose a scheme which does not
require the full knowledge of the measurement correlations. This scheme is based on the maximal violation of FGI together with non-vanishing value of a 
correlation function called mutual predictability which has been used for constructing entanglement witness \cite{SHB+12} and steering inequality \cite{LLC+18}.  For the dichotomic observables $A_x$ and $B_y$ on Alice's and Bob's side, respectively, the mutual predictability is given by
\be
C_{A_x B_y}=p(a=0,b=0|x,y)+p(a=1,b=1|x,y)
\ee
Let us now consider the following quantity in the context of our $1$SDI scenario.
This quantity is defined in terms of mutual predictability as follows:
\be \label{ent}
E=\min\{C_{A_0B_0},C_{A_1B_1}\}.
\ee
Note that for the pure state given by (\ref{state1}) and the measurements that are specified in Lemma $1$, the above quantity is related to concurrence of the state $\ket{\psi(\theta)}$, given by $C=\sin{2\theta}$, as  $E=\dfrac{1+C^2}{2}$. Hence, the quantity $E$ (or, $2E-1$) is a monotonic function of concurrence of the state $\ket{\psi(\theta)}$. Therefore, we can state another self-testing scheme below..

\begin{Result}
The maximal violation of FGI self-tests any pure two-qubit entangled state. On the other hand, nonvanishing  value of the quantity $2E-1$, where $E$ defined in Eq. (\ref{ent}), for the measurements that give rise to the maximal violation of FGI certifies concurrence of the self-tested pure two-qubit entangled state, i. e., the non-vanishing  value of the quantity $2E-1$ for the measurements that give rise to the maximal violation of FGI identifies which particular pure two-qubit entangled state has been selftested by FGI.
\end{Result}
It may be noted that for the above scheme, 
it is not necessary to assume that the trusted party performs measurements in mutually unbiased bases.  In fact we can assume projective measurements of arbitrary pair of noncommuting qubit observables at trusted party's side. In this case also maximum violation of FGI implies EPR paradox \cite{EPR35}, which is only possible if the shared state is any pure two-qubit entangled state \cite{SharpContra}.
So this scheme can also be used for self-testing in the dimension-bounded steering scenario \cite{MGO+16} where only the Hilbert-space dimension
of measurements of the trusted party is assumed.

\section{Conclusions}
Quantum steering which is a weaker form of quantum inseparabilities compared to Bell-nonlocality, certifies the presence of entanglement 
in a $1$-sided device-independent way. This method for certification of entanglement has  advantages over entanglement 
certification methods based on Bell nonlocality and standard entanglement witnesses. Motivated by this,   
recently,  $1$-sided device-independent self-testing of the maximally entangled two-qubit state 
was proposed by Supic et. al. \cite{SH16} 
and Gheorghiu et. al. \cite{GWK17}   via quantum steering. 
 In this work, we have proposed two schemes to self-test any pure (maximally or non-maximally) bipartite qubit entangled state up to some local unitary, in a $1$SDI way via quantum steering. 
 
 One of our schemes is based on two different steering inequalities, i) Fine-grained steering inequality (FGI) \cite{PKM14} and ii) analogous CHSH inequality for steering \cite{CFF+15}. 
We have shown that the violation of the analogous CHSH inequality for steering together with the maximal violation of FGI self-tests any pure two-qubit
 entangled state. 
 In another scheme, we have demonstrated that the nonvanishing value of a quantity constructed from a correlation function called mutual predictability
 together with the maximal violation of FGI can be used to self-test any pure two-qubit  entangled state in the dimension-bounded
 steering scenario.
 It would be interesting to investigate the robustness of our self-testing schemes in a future study.
 
  \section*{Acknowledgement} 
  CJ thanks Manik Banik and Ashutosh Rai for useful discussions on self-testing and acknowledges S. N. Bose Centre, Kolkata for the
postdoctoral fellowship. DD acknowledges the financial support from University Grants Commission (UGC), Government of India. BB and SS acknowledge the financial support from INSPIRE programme, Department of Science and Technology, Government of India. ASM acknowledges Project No.: DST/ICPS/Qust/2018/98 from Department of Science and Technology, Government of India.

\bibliography{JM}

\end{document}